\title{Treewidth Inapproximability and Tight ETH Lower Bound}
\titlerunning{Treewidth Inapproximability and Tight ETH Lower Bound}
\author{\'{E}douard Bonnet}{Univ Lyon, CNRS, ENS de Lyon, Université Claude Bernard Lyon 1, LIP UMR5668, France \and \url{http://perso.ens-lyon.fr/edouard.bonnet/}}{edouard.bonnet@ens-lyon.fr}{https://orcid.org/0000-0002-1653-5822}{}
\authorrunning{\'E. Bonnet}
\newcommand\@@vertwv[3]{\mathop{\ooalign{%
  $#1#3$\cr
  \hfil$#1\shortmid$\hfil\cr
  \hfil\raisebox{#2}{$#1\shortmid$}\hfil\cr}}}
\newcommand\@vertwv[1]{\mathchoice
  {\@@vertwv{\displaystyle     }{0.38ex}{#1}}%
  {\@@vertwv{\textstyle        }{0.38ex}{#1}}%
  {\@@vertwv{\scriptstyle      }{0.28ex}{#1}}%
  {\@@vertwv{\scriptscriptstyle}{0.21ex}{#1}}}
\newcommand\vertwedge{\@vertwv\wedge}
\newcommand\vertvee  {\@vertwv\vee  }
\newtheorem*{rep@theorem}{\rep@title}
\newcommand{\newreptheorem}[2]{%
\newenvironment{rep#1}[1]{%
 \def\rep@title{#2 \ref{##1}}%
 \begin{rep@theorem}}%
 {\end{rep@theorem}}}
\newcommand\tw{\text{tw}}
\begin{document}

\maketitle

\begin{abstract}
   Despite the algorithmic importance of treewidth, both its complexity and approximability present large knowledge gaps. 
  While the best currently known polynomial-time approximation algorithm has ratio $O(\sqrt{\log \text{OPT}})$, no approximation factor could be ruled out under P $\neq$ NP alone.
  There are $2^{O(n)}$-time algorithms to compute the treewidth of $n$-vertex graphs, but the Exponential-Time Hypothesis (ETH) was only known to imply that $2^{\Omega(\sqrt n)}$ time is required.
  The reason is that all the known hardness constructions use \textsc{Cutwidth} or \textsc{Pathwidth} on bounded-degree graphs as an intermediate step in a~long chain of reductions, for which neither inapproximability nor sharp ETH lower bound is known.

  We present a simple, self-contained reduction from \textsc{3-SAT} to \textsc{Treewidth}.
  Our reduction partially closes the first gap and fully resolves the second.  
  Namely, we show that 1.00005-approximating \textsc{Treewidth} is NP-hard, and solving \textsc{Treewidth} exactly requires $2^{\Omega(n)}$ time, unless the ETH fails.
  We further derive, under the latter assumption, that there are some constants $\delta > 1$ and~$c>0$ such that $\delta$-approximating \textsc{Treewidth} requires time $2^{\Omega(n/\log^c n)}$.
\end{abstract}

\section{Introduction}\label{sec:intro}

Treewidth~\cite{Bertele73,Halin76,RobertsonS84} and tree-decompositions\footnote{See their definition in~\cref{subsec:def-tw}.} have a~central role in algorithm design and graph theory, among other areas.
While it is known that computing the treewidth of a~graph is NP-complete~\cite{Arnborg87}, various approximation and/or parameterized algorithms, and exact exponential algorithms have been developed.
The current Pareto front features, on $n$-vertex graphs of treewidth~$k$, a~polynomial-time $O(\sqrt{\log k})$-approximation algorithm~\cite{Feige08}, an~$O(1.7347^n)$-time exact algorithm~\cite{FominTV15}, a~$2^{O(k)}n$-time 2-approximation algorithm~\cite{Korhonen21}, a~$2^{O(k^3)}n$-time exact algorithm~\cite{Bodlaender96}, a~$2^{O(k^2)}n^4$-time exact algorithm, and a $2^{O((k \log k)/\varepsilon)}n^4$-time $(1+\varepsilon)$-approximation algorithm for any $\varepsilon > 0$~\cite{KorhonenL23b}; see the latter reference for a~detailed overview of the state of the art.

On the complexity side, no tight lower bound was previously known.
The original reduction~\cite{Arnborg87}, as well as subsequent constructions strengthening the NP-hardness of \textsc{Treewidth} to graphs of maximum degree at most~9~\cite{BodlaenderT97}, and even to cubic graphs~\cite{Bodlaender23}, all rely on the NP-hardness of \textsc{Cutwidth} or \textsc{Pathwidth} on bounded-degree graphs.
However, the known reductions for the latter results (see for instance~\cite{Monien88}) incur a~quadratic blow-up in the input size, and do not preserve any multiplicative\footnote{Some \emph{additive} inapproximability is known for treewidth~\cite{BodlaenderGHK95}.} inapproximability.
Therefore, prior to the current paper, a~polynomial-time approximation scheme (PTAS) for \textsc{Treewidth} could not be ruled out under the sole assumption\footnote{Assuming the so-called \emph{Small Subset Expansion Hypothesis} (SSEH)---that the edge expansion of sublinear vertex subsets is hard to approximate---it can be shown that any constant-factor approximation of \textsc{Treewidth} is NP-complete~\cite{Wu14}. However, SSEH is a~stronger assumption than the \emph{Unique Games Conjecture}, whose truth does not gather any consensus; let alone one comparable to that of P $\neq$ NP.} that P $\neq$ NP, and the best lower bound based on the Exponential-Time Hypothesis\footnote{The ETH asserts that there is a~$\lambda > 1$ such that no algorithm solves $n$-variable \textsc{3-SAT} in $O(\lambda^n)$ time~\cite{Impagliazzo01}.} (ETH) only implied that \textsc{Treewidth} requires~$2^{\Omega(\sqrt n)}$ time (see for instance the appendix of~\cite{KorhonenL23b}).

In this paper, we present a~simple self-contained linear reduction from \textsc{3-SAT} to \textsc{Treewidth}.
This improves our understanding in the approximability and complexity of treewidth.
We make the reduction modular so that depending on the instances of \textsc{3-SAT} we start with, we derive the following three theorems.

\begin{theorem}\label{thm:no-ptas}
  1.00005-approximating \textsc{Treewidth} is NP-hard. 
\end{theorem}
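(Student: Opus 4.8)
The plan is to reduce from a constraint satisfaction problem whose approximability is well understood under $\P \neq \NP$, namely a gap version of \textsc{3-SAT} (equivalently \textsc{Max 3-SAT}, via the PCP theorem): there is a constant $\varepsilon_0 > 0$ such that it is NP-hard to distinguish satisfiable $3$-CNF formulas from those in which at most a $(1-\varepsilon_0)$-fraction of the clauses can be simultaneously satisfied. In fact, to keep the treewidth bookkeeping linear, one wants the bounded-occurrence version: \textsc{Max 3-SAT} remains APX-hard (Papadimitriou--Yannakakis, Feige) on instances where every variable appears in at most a constant number of clauses, so the incidence-type graph we build has bounded degree and $O(n)$ vertices when the formula has $n$ variables.

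Given such a formula $\varphi$ with variables $x_1,\dots,x_n$ and $m = \Theta(n)$ clauses, I would build a graph $G_\varphi$ together with a target treewidth value $t$ so that: (i) if $\varphi$ is satisfiable then $\tw(G_\varphi) \le t$; and (ii) if every assignment falsifies at least $\varepsilon_0 m$ clauses, then $\tw(G_\varphi) \ge (1+\delta)\, t$ for some absolute $\delta > 0$. The natural gadget template is the one implicit in the paper's ``simple self-contained reduction'': encode each variable by a long structure (e.g., a subdivided clique, a wall, or a path of gadgets) whose optimal tree-decomposition is forced to process the variable-gadgets in some order and to ``remember'' a Boolean value, and attach clause gadgets that can be cheaply absorbed into a bag only when the remembered assignment satisfies the clause. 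A bag that meets an unsatisfied clause gadget must be inflated by some constant number of extra vertices, so each falsified clause contributes additively to the width; the global $2^{O(\varepsilon n)}$-style slack that plagues the \textsc{Cutwidth}/\textsc{Pathwidth} route is avoided precisely because the reduction is linear and the per-clause penalty is a fixed constant. Choosing the parameters so that the ``baseline'' width $t$ is $\Theta(n) = \Theta(m)$ while the penalty from the unsatisfied clauses is $\Theta(\varepsilon_0 m)$ then yields a multiplicative gap $1 + \delta$ with $\delta = \Theta(\varepsilon_0)$; tuning the constants gives the explicit $1.00005$.

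Concretely, the soundness argument is the crux. For completeness I would exhibit, from a satisfying assignment, an explicit tree-decomposition: lay the variable-gadgets along a path-like skeleton in any order, keep in the current bag the $O(1)$ ``interface'' vertices of the two neighboring variable-gadgets, and, when sweeping past a variable-gadget, additionally hold the constantly many clause-gadget vertices of clauses touching that variable — all satisfiable-looking because the remembered bit matches the assignment. Counting the maximum bag size gives $t$. For soundness I would take an optimal tree-decomposition of $G_\varphi$, extract from it an assignment (say, for each variable-gadget, the majority behavior of the bags that separate its two ``ends'', which is well-defined because the gadget is $2$-connected enough to force a large cut), and show that every clause that this assignment falsifies forces some bag to contain a certified extra vertex, with different falsified clauses charging distinct ``slots'' up to a bounded-multiplicity argument; summing gives width $\ge t + \Omega(\varepsilon_0 m)$.

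The main obstacle I anticipate is exactly this charging step: tree-decompositions are far more flexible than path-decompositions, so one cannot simply read off a single linear order of the variable gadgets, and a clever decomposition might try to ``pay once'' for many falsified clauses by cleverly routing the tree. Overcoming this needs the variable-gadgets to be rigid — high enough treewidth or connectivity that any decomposition of near-optimal width must essentially treat each gadget as an indivisible unit with a small, well-located separator — and needs the clause gadgets attached so that their penalty vertices are genuinely forced into locally distinct bags (e.g., by making clause gadgets pairwise ``far'' in $G_\varphi$ or by giving each its own private high-degree core). Getting these rigidity lemmas with clean constants, while keeping every gadget of constant size so the whole construction stays linear, is where the real work lies; the rest is the additive bag-size accounting sketched above.
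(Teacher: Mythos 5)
There is a genuine gap, and you have honestly flagged it yourself: your plan leaves the crucial rigidity step unproven, and your imagined gadget template does not in fact provide it. You propose variable gadgets that are ``long structures'' (walls, subdivided cliques, paths of gadgets) laid along a path-like skeleton so that a decomposition is ``forced to process the variable-gadgets in some order and remember a Boolean value.'' This is exactly the mental model that makes cutwidth/pathwidth reductions work, and as you correctly anticipate, it does not transfer to tree-decompositions: the tree can branch, revisit, and amortize, and none of the ingredients you list (2-connected gadgets, bounded-degree incidence, pairwise-far clause attachments, a ``majority behavior of separating bags'' extraction) yields a clean charging argument that different falsified clauses inflate genuinely distinct bags. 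Absent a lemma pinning down the shape of near-optimal tree-decompositions of your graph, the soundness direction has no proof; this is not a detail to be ``tuned,'' it is the entire difficulty.

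The paper sidesteps all of this with a single structural observation about a very special graph class. It builds a \emph{co-tripartite} graph $G(\varphi)$ whose vertex set is covered by three cliques $A$, $B$, $C$: $A$ holds $7$ vertices per clause (one per satisfying partial assignment), and $B$, $C$ hold $\gamma$-vertex modules $B(x_i)$, $C(x_i)$ per variable, with a complete bipartite join between $B(x_i)$ and $C(x_i)$. The rigidity then comes entirely from the Helly property of subtrees: every tree-decomposition has bags containing $A$, $B$, $C$ respectively (\cref{lem:clique-in-tree-dec}), so one can prune to a subdivided claw with leaves $t_A,t_B,t_C$ (\cref{lem:Y}), and the bag at the unique degree-$3$ node must be a vertex cover of the tripartite ``interclique'' graph $I(G)$ (\cref{lem:neat}). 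Thus $\tw(G)+1$ is sandwiched by the minimum vertex cover of $I(G)$, and the soundness direction becomes a purely combinatorial vertex-cover counting argument, with no per-gadget rigidity lemmas, no separator-majority extraction, and no delicate charging. This is the missing key idea in your proposal: a global structural reason (three covering cliques) that pins down optimal decompositions, as opposed to local gadget rigidity which is precisely the thing that is hard to achieve for treewidth. Your choice of starting point (bounded-occurrence Max~3-SAT à la Berman--Karpinski--Scott, linear blow-up, gap $\Theta(\varepsilon_0)$) and your target parameters are correct; it is the core of the reduction that is absent.
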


\begin{theorem}\label{thm:eth-lb}
  Unless the ETH fails, \textsc{Treewidth} requires $2^{\Omega(n)}$ time on $n$-vertex graphs.
\end{theorem}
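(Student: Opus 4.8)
The plan is to combine the Sparsification Lemma~\cite{Impagliazzo01} with a \emph{linear-size} polynomial-time reduction from \textsc{3-SAT} to \textsc{Treewidth}. By sparsification, it already contradicts the ETH to solve \textsc{3-SAT} with $n$ variables and $m = O(n)$ clauses in time $2^{o(n)}$; hence it suffices to produce, in polynomial time from such a formula $\varphi$, a graph $G_\varphi$ with $|V(G_\varphi)| = O(n + m) = O(n)$ vertices and an integer threshold $t = t(\varphi)$ such that $\varphi$ is satisfiable if and only if $\tw(G_\varphi) \le t$. Indeed, running a hypothetical $2^{o(N)}$-time algorithm for \textsc{Treewidth} on $G_\varphi$ would then decide \textsc{3-SAT} in time $2^{o(n)}$, so the entire content of the theorem lies in this reduction.

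I would assemble $G_\varphi$ from local gadgets so that the size stays linear: one \emph{variable gadget} $X_j$ per variable $x_j$, of size $O(d_j)$ where $d_j$ is the number of occurrences of $x_j$, and one constant-size \emph{clause gadget} $Y_i$ per clause $C_i$; since $\sum_j d_j = O(m)$, this totals $O(n + m)$ vertices. The design goal for $X_j$ is that it has essentially two width-$t$ tree decompositions — one encoding $x_j = \text{true}$, one encoding $x_j = \text{false}$ — differing only in how a small \emph{port set} (the vertices shared with the clause gadgets touching $x_j$) is distributed among the bags, every other decomposition of $X_j$ being forced to have width $> t$; dually, $Y_i$ should fit into a width-$t$ decomposition precisely when one of its three incident ports arrives in the state making the corresponding literal true. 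Joining the gadgets along an auxiliary tree-shaped skeleton of controlled treewidth should then yield $\varphi \in \textsc{3-SAT} \iff \tw(G_\varphi) \le t$.

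For the forward direction, a satisfying assignment $\sigma$ of $\varphi$ directly furnishes a width-$t$ tree decomposition: glue the $\sigma(x_j)$-decomposition of each $X_j$ and the (now available) decomposition of each $Y_i$ along the skeleton. For the converse, assuming $\varphi$ unsatisfiable, I would prove $\tw(G_\varphi) \ge t + 1$ by exhibiting in $G_\varphi$ a bramble (equivalently, a well-linked set, or a tangle) of order $t + 2$: any width-$t$ decomposition must commit each $X_j$ to one of its two states, hence defines an assignment $\sigma$, which by hypothesis falsifies some clause $C_i$; the gadget $Y_i$ is then engineered so that, together with the skeleton, it carries a highly connected obstruction forcing a bag of size $\ge t + 2$ — a contradiction. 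Turning this into a bona fide lower bound on \emph{treewidth} (not merely pathwidth or cutwidth) while spending only $O(1)$ vertices per clause is precisely the step at which, as the introduction recalls, all earlier reductions lost a polynomial factor; I expect it to be the main obstacle. Finally, feeding the same construction with gap instances of \textsc{3-SAT} from the PCP theorem (resp.\ from a near-linear-size PCP) should amplify a slight additive deficit in satisfied clauses into a constant \emph{multiplicative} gap in treewidth, giving \cref{thm:no-ptas} (resp.\ the $2^{n^{1-o(1)}}$ statement).
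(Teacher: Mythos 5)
Your high-level plan — sparsify first so that $m=O(n)$, then build a graph on $O(n)$ vertices with a treewidth threshold separating satisfiable from unsatisfiable — is exactly the right outer shell, and it matches the paper's strategy. But the core of the theorem is the construction, and here the proposal essentially restates the design goals (``one variable gadget with two states, one $O(1)$-size clause gadget, joined along a tree skeleton'') without supplying them. You even flag the crux yourself: \emph{``Turning this into a bona fide lower bound on treewidth\ldots{} is precisely the step at which all earlier reductions lost a polynomial factor; I expect it to be the main obstacle.''} That is correct — and it means the proposal contains no new idea for surmounting it.

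The paper's key idea, absent from your plan, is to make $G(\varphi)$ \emph{co-tripartite} with tripartition $(A,B,C)$: one clique $A$ holding seven vertices per clause (one per satisfying partial assignment), and cliques $B,C$ holding size-$\gamma$ modules $B(x_i),C(x_i)$ for the positive/negative occurrences of each variable, with $B(x_i)$ complete to $C(x_i)$. There is no tree-shaped skeleton of local gadgets; all clause vertices sit in a single clique. This is decisive because co-tripartite graphs always admit an optimal tree-decomposition whose shape is a subdivided claw, with leaf bags containing $A$, $B$, $C$ respectively, and — crucially — whose unique degree-$3$ bag is a \emph{vertex cover of $I(G)$}, the graph obtained by deleting the intra-clique edges. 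Thus $\tw(G)+1$ is tied to the minimum vertex cover of $I(G)$, and one only needs that \textsc{Min Vertex Cover} is hard on tripartite graphs (whereas on bipartite graphs, i.e.\ the co-bipartite construction used in prior reductions, it is polynomial-time). Your bramble sketch (``any width-$t$ decomposition commits each $X_j$ to a state, defining an assignment'') gestures in the right direction — the paper notes that $E(I(G))$ is itself a bramble — but you never specify a gadget that realizes the two-state behaviour cheaply, which is the whole difficulty. Separately, the paper does not get a clean ``satisfiable iff $\tw\le t$''; the upper and lower treewidth bounds differ by an additive $\gamma$, so it duplicates the sparsified formula $\gamma+1$ times to open a gap of size $>\gamma$. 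Your plan would need some analogous fix once the bounds are made explicit.
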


\begin{theorem}\label{thm:eth-inapprox}
  Unless the ETH fails, there are~constants $\delta>1$ and $c>0$ such that $\delta$-approximating \textsc{Treewidth} requires $2^{\Omega(n/\log^c n)}$ time on $n$-vertex graphs.
\end{theorem}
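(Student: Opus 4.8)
The plan is to obtain Theorem~\ref{thm:eth-inapprox} by \emph{composing two gap-producing reductions}: a probabilistically checkable proof (PCP) with quasi-linear size blow-up, which turns (ETH-hard) \textsc{3-SAT} into \textsc{3-SAT} with a constant satisfiability gap, followed by the modular reduction of this paper, which turns gap-\textsc{3-SAT} into \textsc{Treewidth} with a constant \emph{multiplicative} gap while inflating the instance size only linearly. Concretely, I would feed the modular reduction with the "far-from-satisfiable" flavour of \textsc{3-SAT}, exactly as in the proof of Theorem~\ref{thm:no-ptas}, but track the instance size as carefully as in the proof of Theorem~\ref{thm:eth-lb}.

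\textbf{Step 1 (sparsification).} By the Sparsification Lemma, the ETH implies that no $2^{o(n+m)}$-time algorithm decides satisfiability of $n$-variable, $m$-clause \textsc{3-CNF} formulas. Writing $s=n+m$ for the size, \textsc{3-SAT} on size-$s$ instances thus requires $2^{\Omega(s)}$ time under the ETH.

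\textbf{Step 2 (almost-linear PCP).} Dinur's gap-amplification theorem, composed with the standard constant-size translation back to \textsc{3-CNF}, turns a size-$s$ \textsc{3-CNF} formula $\phi$ into a size-$N$ \textsc{3-CNF} formula $\psi$ with $N = s\cdot(\log s)^{O(1)} = s^{1+o(1)}$ such that: if $\phi$ is satisfiable, so is $\psi$; and if $\phi$ is unsatisfiable, then at most a $(1-\varepsilon_0)$-fraction of the clauses of $\psi$ are simultaneously satisfiable, for an absolute constant $\varepsilon_0>0$. Chaining with Step~1: a $2^{N^{1-\varepsilon}}$-time algorithm for this gap version of \textsc{3-SAT}, for any fixed $\varepsilon>0$, would decide size-$s$ \textsc{3-SAT} in time $2^{(s(\log s)^{O(1)})^{1-\varepsilon}} = 2^{s^{1-\varepsilon+o(1)}} = 2^{o(s)}$, contradicting the ETH. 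Hence, under the ETH, $\varepsilon_0$-gap-\textsc{3-SAT} requires $2^{N^{1-o(1)}}$ time.

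\textbf{Step 3 (the treewidth reduction, and the obstacle).} Apply the modular reduction to $\psi$: in polynomial time it outputs a graph $G_\psi$ on $O(N)$ vertices together with an integer $t$ such that $\tw(G_\psi)\le t$ when $\psi$ is satisfiable, while $\tw(G_\psi)\ge \delta t$ whenever at most a $(1-\varepsilon_0)$-fraction of $\psi$'s clauses can be simultaneously satisfied, where $\delta = \delta(\varepsilon_0)>1$ is the same constant-factor gap underlying Theorem~\ref{thm:no-ptas}. If some algorithm $\delta$-approximated \textsc{Treewidth} on $n$-vertex graphs in time $2^{n^{1-\varepsilon}}$ for a fixed $\varepsilon>0$, then running it on $G_\psi$ (with $n = O(N)$) would distinguish $\tw\le t$ from $\tw\ge\delta t$, hence decide $\varepsilon_0$-gap-\textsc{3-SAT} on size-$N$ instances in time $2^{O(N)^{1-\varepsilon}} = 2^{N^{1-\varepsilon/2}}$ for large $N$, contradicting Step~2 and therefore the ETH. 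Thus $\delta$-approximating \textsc{Treewidth} requires $2^{n^{1-o(1)}}$ time, which is Theorem~\ref{thm:eth-inapprox}. The only genuinely new ingredient over Theorems~\ref{thm:no-ptas} and~\ref{thm:eth-lb} is the quasi-linear-size PCP, which is also exactly where the $n^{1-o(1)}$ exponent (rather than $\Omega(n)$) comes from; the points to pin down are that the $(\log s)^{O(1)}$ factors from gap amplification and from the CSP-to-\textsc{3-CNF} back-translation really stay within $s^{1+o(1)}$, and that the constant treewidth gap $\delta$ obtained from the constant MAX-\textsc{3-SAT} gap $\varepsilon_0$ is precisely the one already analysed for Theorem~\ref{thm:no-ptas}, so that no further amplification is needed. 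Neither looks problematic, but both must be verified carefully.
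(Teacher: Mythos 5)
Your proposal follows the same high-level scheme as the paper's proof: combine the Sparsification Lemma with a quasi-linear-size PCP to obtain a constant-gap, bounded-occurrence \textsc{3-SAT} instance on $n^{1+o(1)}$ variables, then feed it into the modular treewidth reduction. The one real difference is the choice of quasi-linear PCP: the paper invokes Moshkovitz and Raz's near-linear two-query PCP (citing Bonnet et al.~for the composition details and the bounded-occurrence guarantee, stated as \cref{thm:subexp-inapprox-3sat} for \textsc{3-SAT-$B$}), whereas you invoke Dinur's gap-amplification-based quasi-linear PCP. Both are legitimate sources of the $n^{1+o(1)}$-size gap instance, so this is a matter of taste rather than a different route.

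There is one point you gloss over that the paper is explicitly careful about, and that you should flag: for the reduction to output a graph on $O(N)$ vertices you need $\gamma=O(1)$, i.e.\ the gap-\textsc{3-SAT} instance $\psi$ must have a constant bound on the number of occurrences of each variable. If some variable appears $\Omega(N)$ times, then $\gamma=\Omega(N)$ and $G(\psi)$ has $\Omega(N^2)$ vertices, which destroys the $2^{n^{1-o(1)}}$ exponent. This is precisely why the paper states \cref{thm:subexp-inapprox-3sat} for \textsc{3-SAT-$B$} rather than plain \textsc{3-SAT}. Dinur's construction does maintain constant degree throughout (via its preprocessing/expanderization step), so the claim survives once this is spelled out; but ``apply the modular reduction to $\psi$'' is not justified until the bounded-occurrence property of $\psi$ is established. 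A second, cosmetic remark: the sparsified instance is a disjunction of $2^{o(s)}$ subformulas, so one must iterate the whole pipeline over all of them (as the paper does with the family $(\varphi_i)_i$) rather than apply it to a single $\psi$; this costs an extra $2^{o(s)}$ factor which is absorbed by $2^{N^{1-o(1)}}$, but it is worth saying.
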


The first theorem rules out a~PTAS for \textsc{Treewidth} unless P $=$ NP (and establishes its APX-hardness).
The polynomial-time approximability of \textsc{Treewidth} has often been raised as an important open question; see for instance \cite{BodlaenderGHK95,Feige08,Amir10,Kintali10,FominV12,Kozawa14,Wu14,Yamazaki18,BelbasiF22}.
There is still a~very large gap between this inapproximability factor and the $O(\sqrt{\log \text{OPT}})$-approximation algorithm of Feige, Hajiaghayi, and Lee~\cite{Feige08}, but at least the lower bound has moved for the first time.
The second theorem establishes a~tight ETH lower bound: running time $2^{O(n)}$ (see for instance~\cite{FominTV15}) is best possible to exactly compute treewidth, unless the ETH fails.
This was posted as an open problem in~\cite{KorhonenL23b} and \cite[Question 3]{KorhonenThesis}.
The third theorem combines the previous two hardness features.
In particular, it (loosely) complements the $2^{O((k \log k)/\varepsilon)}n^{O(1)}$-time $(1+\varepsilon)$-approximation algorithm of Korhonen and Lokshtanov~\cite{KorhonenL23b}, in the sense that even for some fixed sufficiently small $\varepsilon>0$ (depending on the ETH constant~$\lambda$), there is a~constant~$c>0$ such that the exponent $O(k \log k)$ in the running time cannot be improved to $O(k/\log^c k)$.

\medskip

\textbf{Techniques.}
The known hardness constructions leverage the fact that on co-bipartite graphs (complements of bipartite graphs), tree-decompositions behave orderly.
They are tame enough to allow a~simple reduction from \textsc{Cutwidth},\footnote{We will \emph{not} need the definition of \textsc{Cutwidth}.} and actually treewidth and pathwidth are equal on co-bipartite graphs \cite{Arnborg87}.
However, it seems challenging to design a~linear reduction from \textsc{3-SAT} to \textsc{Treewidth} on these graphs.
It is indeed unclear how to design a~(binary) choice gadget in this restricted setting.

We thus move to co-tripartite graphs $G$ with tripartition $(A,B,C)$ (into cliques) where $A$ encodes the clauses, and $B \cup C$ encodes the variables, with $B$ associated to their positive form, and $C$, their negation.
More precisely, a~blow-up (where vertices are replaced by clique modules) of a~semi-induced matching\footnote{Here, an induced matching after removing the edges of the cliques $B$ and $C$.} between $B$ and~$C$ constitutes our variable gadgets.
For $A$, we add $7=2^3-1$ vertices for each clause, one for each partial satisfying assignment of the clause, each adjacent to the three modules corresponding to its literals; see~\cref{fig:construction}.
This is to turn the disjunctive nature of a~3-clause into some conjunctive encoding, which better fits tree-decompositions.

Although co-tripartite graphs provide the greater generality (compared to co-bipartite graphs) that allows us to simply design choice gadgets, their tree-decompositions are still tame enough.
In particular, they always admit a~tree-decomposition of minimum width whose underlying tree is a~subdivided claw, and whose three leaf bags contain $A$, $B$, and $C$, respectively.
The crucial property that these tree-decompositions share is that the bag of the unique degree-3 node is a~vertex cover of the graph $I(G)$, defined as $G$ deprived of the edges within the cliques $A$, $B$, and~$C$. 
A~reader familiar with the notion of \emph{bramble} may already observe that the edge set of $I(G)$ is a~bramble of $G$, hence the vertex cover number of $I(G)$ minus one indeed lower bounds the treewidth of~$G$; a~useful fact for the unsatisfiable \textsc{3-SAT} instances.

This is where moving from co-bipartite graphs to co-tripartite graphs is decisive: \textsc{Min Vertex Cover} is polynomial-time solvable on bipartite graphs, but NP-complete on tripartite graphs.
We can thus simply rely on the hardness of~\textsc{Min Vertex Cover}, while this was impossible for the previous reductions based on co-bipartite graphs.
This is indeed what we do.
The treewidth upper bound for the satisfiable \textsc{3-SAT} instances can easily be derived via the \emph{Cops and Robber game}.
We will in fact \emph{not} use \emph{brambles} nor the \emph{Cops and Robber game} in order to make the paper self-contained and more widely accessible.
Readers comfortable with these notions will be able to skip~\cref{subsec:co-trip}, and find their own alternative proof in~\cref{subsec:forward}.

\medskip

\textbf{Further work.}
As far as the polynomial-time approximability of \textsc{Treewidth} is concerned, there is still a~large gap between \cref{thm:no-ptas} and the $O(\sqrt{\log k})$-approximation algorithm \cite{Feige08}.
As we will observe in~\cref{subsec:instantiate}, the inapproximability factor of 1.00005 could be somewhat improved by reducing from \textsc{Max 2-SAT} instead of \textsc{Max 3-SAT}.
However, it would then remain in the regime of $1+\alpha$ with $\alpha \ll 1$. 
Furthermore, there is a~straightforward 3-approximation algorithm in co-tripartite instances, and more generally a~constant-approximation algorithm in instances whose vertex sets can be covered by a~constant number of cliques.
If indeed \textsc{Treewidth} cannot be polynomial-time approximated within any constant ratio (unless P~$=$~NP), showing such a~statement will require new ideas.
As a~more humble intermediate step, pushing the inapproximability factor to~2 (or, at~least, $2-\varepsilon$ for any $\varepsilon > 0$) would nicely contrast with the~$2^{O(k)}n$-time 2-approximation algorithm~\cite{Korhonen21}.

Now we know that, for any constant $c$, a~$2^{o(k)}n^c$-time exact algorithm for \textsc{Treewidth} would refute the ETH (by~\cref{thm:eth-lb}), a~natural remaining open question is to close the gap with the $2^{O(k^2)}n^4$-time exact algorithm~\cite{KorhonenL23b}.
We finally observe that the pathwidth of the constructed hard instances is in general different from their treewidth.
We nonetheless wonder if our ideas could help establish some inapproximability results for \textsc{Pathwidth} or other related parameters. 

\section{Definitions and Notation}

If $i$ is a~positive integer, we denote by $[i]$ the set of integers $\{1,2,$ $\ldots,i\}$.

\subsection{Classical Graph-Theoretic Notions and Notation}

We denote by $V(G)$ and $E(G)$ the set of vertices and edges of a graph $G$, respectively.
For $S \subseteq V(G)$, the \emph{subgraph of $G$ induced by $S$}, denoted $G[S]$, is obtained by removing from $G$ all the vertices that are not in $S$ (together with their incident edges).
A~set $X \subseteq V(G)$ is \emph{connected} (in $G$) if $G[X]$ has a~single connected component, and \emph{disconnected} otherwise.
A~graph is \emph{co-tripartite} (resp.~\emph{co-bipartite}) if it is the complement of a~tripartite graph (resp.~bipartite) graph, or equivalently can have its vertex set partitioned into three (resp.~two) cliques.
A~\emph{vertex cover} of a~graph $G$ is a~subset $S \subseteq V(G)$ such that every edge of~$G$ has at~least one of its two endpoints in~$S$.

We denote by $N_G(v)$ and $N_G[v]$, the open, respectively closed, neighborhood of $v$ in $G$.
We define $N_G(S) := \bigcup_{v \in S}N_G(v) \setminus S$ for $S \subseteq V(G)$, and $N_G[S] := N_G(S) \cup S$.
A~\emph{module} in $G$ is a~subset $Y \subseteq V(G)$ such that for every $u, v \in Y$, $N_G(u) \setminus Y = N_G(v) \setminus Y$.
The \emph{degree} $d_G(v)$ of a~vertex $v \in V(G)$ is the size of $N_G(v)$, and the \emph{maximum degree} of $G$ is $\max_{v \in V(G)} d_G(v)$.
In all the previous notations, we may omit the graph subscript if it is clear from the context.
A~\emph{subdivided claw} is any tree with exactly three leaves.
Note that any subdivided claw has exactly one vertex of degree~3, and apart from it and its three leaves, only vertices of degree~2.

\subsection{Tree-Decompositions and Treewidth}\label{subsec:def-tw}

A~\emph{tree-decomposition} of a~graph $G$ is a~pair $(T,\beta)$ where $T$ is a~tree and $\beta$ is a~map from $V(T)$ to $2^{V(G)}$ satisfying the following properties:
\begin{compactitem}
\item for every $v \in V(G)$, $\{t \in V(T)~:~v \in \beta(t)\}$ induces a~non-empty subtree of $T$, and
\item for every $uv \in E(G)$, there is a~$t \in V(T)$ such that $\{u,v\} \subseteq \beta(t)$.
\end{compactitem}
The \emph{width} of $(T,\beta)$ is defined as $\max_{t \in V(T)} |\beta(t)| - 1$, and the \emph{treewidth} of $G$, denoted by $\tw(G)$, is the minimum width of $(T,\beta)$ taken among every tree-decomposition $(T,\beta)$ of~$G$.
A~\emph{path-decomposition} is a~tree-decomposition $(T,\beta)$ where $T$ is a~path.
And \emph{pathwidth} is defined as \emph{treewidth} with \emph{path-decompositions} instead of \emph{tree-decompositions}. 

We may call $\beta(t)$ the \emph{bag} of $t \in V(T)$.
We also call \emph{trace} of $v \in V(G)$ the set $\{t~\in~V(T)~: v \in \beta(t)\}$.
We may say that an edge $uv \in E(G)$ is \emph{covered} by a~tree-decomposition $(T,\beta)$ (not a~priori claimed to be one of~$G$) if there is a~$t \in V(T)$ such that $\{u,v\} \subseteq \beta(t)$.
More specifically, the edge $uv$ is \emph{covered} by node $t$.
A~pair $(T,\beta)$ where $T$ is a~tree and $\beta$ is a~map from the nodes of~$T$ to subsets of some universe $U$ is a~(valid) tree-decomposition (of some graph) if the trace of each element of $U$ induces a~non-empty subtree of~$T$.
It further is a~tree-decomposition \emph{of $G$} if $U=V(G)$ and every edge of $G$ is covered by~$(T,\beta)$.

\section{Treewidth Hardness}\label{sec:main}

Let $\varphi$ be a~3-CNF formula.
We denote by $x_1, \ldots, x_n$ the variables of $\varphi$, and by $c_1, \ldots, c_m$, its clauses, each of them on exactly three literals.

\subsection{Construction of \boldmath{$G(\varphi)$}}\label{subsec:constr}

We build a~co-tripartite graph $G := G(\varphi)$ with $2 \gamma n+7m$ vertices, where $\gamma$ is a~natural number to be instantiated.
The set $V(G)$ is partitioned into $(A,B,C)$ with $G[A]$, $G[B]$, and $G[C]$ each being a~clique.
The set $A$ represents the clauses, and $B \cup C$, the variables, with $B$ corresponding to their positive form, and $C$ their negation.

For every variable $x_i$, we add $\gamma$ vertices\footnote{Only to obtain \cref{thm:eth-inapprox}, $\gamma$ will depend on $i$, and will be denoted $\gamma_i$.} $b^1_i, \ldots, b^\gamma_i$ to $B$, and $\gamma$ vertices $c^1_i, \ldots, c^\gamma_i$ to $C$.
We keep the value of the natural number $\gamma$ generic.
The only constraint on $\gamma$ is that there is \emph{no literal} with $p$ positive occurrences and $q$ negative occurrences in $\varphi$ such that $4p+3q>\gamma$.
To clarify, every clause of $\varphi$ containing $\neg x_i$ counts for a~\emph{positive} occurrence of literal $\neg x_i$, and every clause of $\varphi$ containing $x_i$ counts for a~negative occurrence of literal $\neg x_i$.

(For the concreteness of showing~\cref{thm:no-ptas}, the reader can assume that every variable appears exactly twice positively and exactly twice negatively, and $\gamma := 4 \cdot 2 + 3 \cdot 2 = 14$.) 
We set $B(x_i) := \{b^1_i, \ldots, b^\gamma_i\}$ and $C(x_i) := \{c^1_i, \ldots, c^\gamma_i\}$.
For every $i,i' \in [n]$ and $h,h' \in [\gamma]$, the two vertices $b^h_i, c^{h'}_{i'}$ are made adjacent if and only if $i = i'$.
Each set $B(x_i)$ and $C(x_i)$ will remain a~module in~$G(\varphi)$, and we will therefore seldom refer to~$b^h_i$ individually.

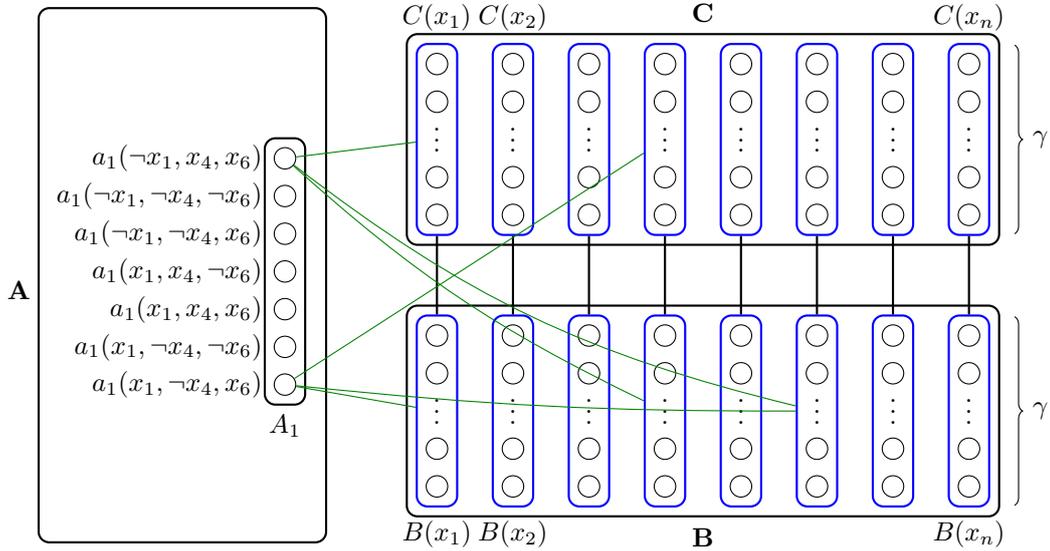
\begin{figure}[h!]
  \phantomsection
  \centering
  \begin{tikzpicture}[vertex/.style={draw,circle,inner sep=0.1cm}]
    \def\n{8}
    \def\hs{1}
    \def\vs{0.5}
    \def\oc{3.6}
    
    \foreach \i in {1,...,\n}{
      \foreach \h in {1,2,4,5}{
        \node[vertex] (b\i\h) at (\i * \hs - \hs, \vs * \h) {} ;
        \node[vertex] (c\i\h) at (\i * \hs - \hs, \vs * \h + \oc) {} ;
      }
      \node at (\i * \hs - \hs, 3.2 * \vs) {$\vdots$} ;
      \node at (\i * \hs - \hs, 3.2 * \vs + \oc) {$\vdots$} ;
      \node[draw,blue,rounded corners,thick,fit=(b\i1) (b\i5)] (B\i) {} ;
      \node[draw,blue,rounded corners,thick,fit=(c\i1) (c\i5)] (C\i) {} ;
      \draw[thick] (B\i) -- (C\i) ;
    }

    \node[draw,rounded corners,thick,fit=(B1) (B\n)] (B) {} ;
    \node[draw,rounded corners,thick,fit=(C1) (C\n)] (C) {} ;

    \foreach \i/\il in {1/1,2/2,\n/n}{
      \node at (\i * \hs - \hs, -0.29 * \vs) {$B(x_{\il})$} ;
      \node at (\i * \hs - \hs, 6.29 * \vs + \oc) {$C(x_{\il})$} ;
    }

    \node at (4.5 * \hs - \hs, -0.35 * \vs) {$\mathbf{B}$} ;
    \node at (4.5 * \hs - \hs, 6.35 * \vs + \oc) {$\mathbf{C}$} ;

    \draw [decorate,decoration={brace,amplitude=3pt, mirror}] (\n * \hs - 0.4 * \hs, 0.5 * \vs) to node [midway,right] {$~\gamma$} (\n * \hs - 0.4 * \hs, 5.5 * \vs) ;
    \draw [decorate,decoration={brace,amplitude=3pt, mirror}] (\n * \hs - 0.4 * \hs, 0.5 * \vs + \oc) to node [midway,right] {$~\gamma$} (\n * \hs - 0.4 * \hs, 5.5 * \vs + \oc) ;

    \foreach \h in {1,...,7}{
      \node[vertex] (s\h) at (-2 * \hs, \h * \vs + 2.7 * \vs) {} ;
    }
    \foreach \h/\l in {1/{$a_1(x_1,\neg x_4,x_6)~~$},2/{$a_1(x_1,\neg x_4,\neg x_6)~~~~$},3/{$a_1(x_1,x_4,x_6)$},4/{$a_1(x_1,x_4,\neg x_6)~~$},5/{$a_1(\neg x_1,\neg x_4,x_6)~~~~$},6/{$a_1(\neg x_1,\neg x_4,\neg x_6)~~~~~~$},7/{$a_1(\neg x_1,x_4,x_6)~~$}}{
      \node at (-3.38 * \hs, \h * \vs + 2.7 * \vs) {\l} ;
    }
    \node at (-2 * \hs, 2.6 * \vs) {$A_1$} ;
    \node[draw,rounded corners,thick,fit=(s1) (s7)] (A1) {} ;

    \node at (-5 * \hs, 0) (bl) {} ;
    \node at (-1.7 * \hs, 6 * \vs + \oc) (tr) {} ;
    \node[draw,rounded corners,thick,fit=(bl) (tr)] (A) {} ;

    \node at (-5.5 * \hs, 6.2 * \vs) {$\mathbf{A}$} ;

    \foreach \i/\j/\b in {1/B1/0,1/C4/0,1/B6/-3, 7/C1/0,7/B4/-8,7/B6/-12}{
      \draw[green!50!black] (s\i) to [bend left = \b] (\j) ;
    }
  \end{tikzpicture}
  \caption{Illustration of $G(\varphi)$ with $n=8$.
    For the sake of not cluttering the picture, we did not draw the edges within the cliques $A$, $B$, and $C$.
    We also only represented the vertices of $A$ encoding the clause $c_1=x_1 \lor \neg x_4 \lor x_6$, i.e., $A_1$, and only drew the edges incident to $a_1(x_1,\neg x_4,x_6)$ and to~$a_1(\neg x_1,x_4,x_6)$.
    The blue vertical boxes are modules.
    Hence an edge incident to a~blue box represents all the adjacencies between the enclosed vertices and the vertex (or vertices of another blue box) at the other end of the edge.}   
  \label{fig:construction}
\end{figure}

For every clause $c_j = \ell_1 \lor \ell_2 \lor \ell_3$ with $\ell_p = (\neg) x_{i_p}$ for $p \in [3]$, we add a~set $A_j$ of 7 vertices to $A$, one for each assignment of its three variables satisfying~$c_j$.
For $(s_1, s_2, s_3) \in (\{\ell_1, \neg \ell_1\} \times \{\ell_2, \neg \ell_2\} \times \{\ell_3, \neg \ell_3\}) \setminus \{(\neg \ell_1, \neg \ell_2, \neg \ell_3)\}$, we denote by $a_j(s_1,s_2,s_3)$ the corresponding vertex of $A$, while syntactically replacing $\neg \neg x$ by $x$.
For each $p \in [3]$, we make $a_j(s_1,s_2,s_3)$ fully adjacent to $B(x_{i_p})$ if $s_p = x_{i_p}$, or to $C(x_{i_p})$ if $s_p = \neg x_{i_p}$.
This finishes the construction of~$G$; see~\cref{fig:construction}.

\subsection{Tree-Decompositions of Co-tripartite Graphs}\label{subsec:co-trip}

Here we show that every co-tripartite graph $G$ with tripartition $(A,B,C)$ has a~tree-decomposition $(T,\beta)$ of width $\tw(G)$ such that $T$ is a~subdivided claw, and the bag of its vertex of degree~3 is a~vertex cover of \emph{$I(G)$}, the graph $G$ deprived of the edges within the cliques $A$, $B$, and $C$.
A~reader familiar with \emph{brambles} may observe that the family of pairs $\mathcal B := \{\{u,v\}~:~uv \in E(G) \setminus \bigcup_{X \in \{A,B,C\}} E(G[X]) = E(I(G))\}$ is a~bramble of~$G$, and that a~hitting set of $\mathcal B$ is, by definition, a~vertex cover of $I(G)$; thereby reaching the desired treewidth lower bound.
They may then skip \cref{subsec:co-trip}.
We chose this presentation, as \cref{lem:neat} better prepares to~\cref{subsec:forward,subsec:backward}.

We start by recalling a~classical lemma.

\begin{lemma}\label{lem:clique-in-tree-dec}
  Let $G$ be a~graph, $X$ be a~clique of $G$, and $(T,\beta)$ be a~tree-decomposition of~$G$.
  Then there is a~node $t \in V(T)$ such that $X \subseteq \beta(t)$.
\end{lemma}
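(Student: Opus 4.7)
The plan is to use the Helly property of subtrees of a tree: any finite family of pairwise intersecting subtrees of a tree has a common node. First, I would associate to each $v \in X$ its \emph{trace} $T_v := \{t \in V(T) : v \in \beta(t)\}$, which, by the first axiom of tree-decompositions, induces a non-empty subtree of $T$. Thus we have a finite family $\{T_v\}_{v \in X}$ of non-empty subtrees of~$T$.

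Next, I would verify pairwise intersection. For any distinct $u,v \in X$, the pair $uv$ is an edge of $G$ since $X$ is a clique, so by the second axiom of tree-decompositions there exists $t \in V(T)$ with $\{u,v\} \subseteq \beta(t)$, which means $t \in T_u \cap T_v$. Applying the Helly property yields a node $t^\star \in \bigcap_{v \in X} T_v$, and by definition of the trace we get $X \subseteq \beta(t^\star)$, as required.

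The only non-trivial ingredient is the Helly property for subtrees of a tree, which is standard but worth sketching. I would prove it by induction on $|X|$. The case $|X| \leq 2$ is immediate. For the inductive step, pick any $v \in X$ and let $X' := X \setminus \{v\}$. By induction, some node $t_1$ satisfies $X' \subseteq \beta(t_1)$, and we already know $T_v$ is a non-empty subtree. If $t_1 \in T_v$ we are done; otherwise, walk from $t_1$ along the unique $t_1$-to-$T_v$ path in~$T$, and let $t^\star$ be the first node of $T_v$ met. For each $u \in X'$, there is a node $t_u \in T_u \cap T_v$ (by the clique property), and since $T_u$ contains both $t_1$ and $t_u$, it contains the entire path between them; in particular it contains $t^\star$. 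Hence $X' \cup \{v\} \subseteq \beta(t^\star)$.

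I expect no substantial obstacle: the lemma is folklore. The only care needed is making the path-connectivity argument in the Helly step precise, so I would state the Helly property as a short preliminary fact and invoke it cleanly.
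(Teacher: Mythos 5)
Your proof is correct and follows essentially the same route as the paper: the traces of the clique vertices are non-empty subtrees of $T$ that pairwise intersect because each edge of the clique must be covered, and the Helly property for subtrees of a tree gives a common node. The only difference is that you also sketch a (correct) inductive proof of the Helly step, which the paper simply cites as a standard fact.
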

\begin{proof}
  Every two vertices of $X$ have to appear together in some bag of~$(T,\beta)$.
  Let $v_1, \ldots, v_{|X|}$ enumerate $X$, and assume for the sake of contradiction, that the traces of $v_1, \ldots, v_s$ (with $s < |X|$) have a~non-empty common intersection $Y$ such that the trace of $v_{s+1}$ is disjoint from~$Y$.
  Note that $T[Y]$ is a~tree, and the trace of $v_{s+1}$ is included in the vertex set $Z$ of a~connected component of $T[V(T) \setminus Y]$.
  There is a~single node $z \in Z$ adjacent to a~node of~$Y$.
  By definition of~$Y$, there is a~$v_i$ with $i \in [s]$ such that $v_i \notin \beta(z)$.
  As traces should be connected, this contradicts that there is a~bag of~$(T,\beta)$ containing both $v_i$ and $v_{s+1}$.
\end{proof}

\Cref{lem:clique-in-tree-dec} justifies the existence of $t_A, t_B, t_C$ in the assumption of the following lemma.

\begin{lemma}\label{lem:Y}
  Let $G$ be a~co-tripartite graph with tripartition $(A,B,C)$, and $(T,\beta)$ be a~tree-decompo\-si\-tion of~$G$.
  Let $t_A, t_B, t_C \in V(T)$ be such that $A \subseteq \beta(t_A)$, $B \subseteq \beta(t_B)$, and $C \subseteq \beta(t_C)$.
  Let $T'$ be the minimal subtree of $T$ containing $t_A, t_B, t_C$.
  Then $(T',\beta)$ is a~tree-decomposition of $G$ (where $\beta$ is used, for simplicity's sake, for its restriction to $V(T')$).
\end{lemma}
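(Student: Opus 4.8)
The plan is to verify the two defining properties of a tree-decomposition of $G$ directly for $(T',\beta)$. The key observation, which I would establish first, is the following structural claim: \emph{every vertex $v \in V(G)$ has a nonempty trace in $T'$}. Indeed, $v$ lies in one of the cliques $A$, $B$, or $C$, say $v \in A$; then $v \in \beta(t_A)$, so $t_A$ itself witnesses that the trace of $v$ in $T$ meets $V(T')$. Since the trace of $v$ in $T$ is a subtree of $T$ (by the first axiom for $(T,\beta)$), its intersection with the subtree $T'$ is again a subtree — a subtree of a tree intersected with a connected subtree is connected — and it is nonempty by the above. Hence the trace of $v$ in $(T',\beta)$ is a nonempty subtree of $T'$, which is exactly the first axiom restricted to $T'$.

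Next I would handle the edge-covering axiom, which is where the co-tripartite structure is used. Let $uv \in E(G)$. There are two cases. If $u$ and $v$ lie in the same clique, say both in $A$, then $\{u,v\} \subseteq A \subseteq \beta(t_A)$ and $t_A \in V(T')$ covers the edge. Otherwise $u$ and $v$ lie in different parts; without loss of generality $u \in A$ and $v \in B$. Consider the traces of $u$ and of $v$ in $T$: the trace of $u$ contains $t_A$, the trace of $v$ contains $t_B$, and because $uv \in E(G)$ the two traces must intersect in $T$, say at a node $s$. I claim $s \in V(T')$. The trace of $u$ is a subtree of $T$ containing both $t_A$ and $s$, hence it contains the (unique) $t_A$–$s$ path in $T$; similarly the trace of $v$ contains the $t_B$–$s$ path. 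Now in a tree, the union of the $t_A$–$s$ path and the $t_B$–$s$ path contains the $t_A$–$t_B$ path, which lies in $T'$ by minimality of $T'$; more to the point, the node $s$ lies on the walk from $t_A$ to $t_B$ that goes $t_A \to s \to t_B$, and since $T'$ is the minimal subtree through $t_A,t_B,t_C$ it contains the full $t_A$–$t_B$ path. It remains to see $s$ itself is on that path: if not, the $t_A$–$s$ path and the $s$–$t_B$ path would share an edge, contradicting that their concatenation is a walk in a tree whose only repeated traversals come from backtracking — more cleanly, the median (in the tree-metric sense) of $t_A$, $t_B$, $s$ lies on the $t_A$–$t_B$ path and is dominated by $s$ in the trace of both $u$ and $v$, so we may replace $s$ by this median, which lies in $T'$ and still contains both $u$ and $v$ in its bag. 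Thus the edge $uv$ is covered in $(T',\beta)$.

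Finally, I would observe that the universe is unchanged, $U = V(G)$, so having checked both axioms restricted to $T'$, the pair $(T',\beta)$ (with $\beta$ restricted to $V(T')$) is a valid tree-decomposition of $G$ in the sense defined in~\cref{subsec:def-tw}. The only mildly delicate point — and the step I expect to require the most care in writing out — is the "median" argument showing that the intersection node $s$ of two traces can be pushed onto the $t_A$–$t_B$ path: it is an elementary fact about trees (the three pairwise paths among $t_A, t_B, s$ meet in a single node, their median), but it must be invoked correctly so that the chosen node genuinely lies in $T'$ and genuinely still sees the edge $uv$. Everything else is a routine "restriction is still a subtree" bookkeeping argument.
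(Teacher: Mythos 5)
Your proof is correct, and it takes a mildly different route from the paper's, so let me compare. The paper deletes nodes of $T$ outside $T'$ all at once: for each $t \in V(T) \setminus V(T')$, letting $t'$ be the node of $T'$ closest to $t$, it shows $\beta(t) \subseteq \beta(t')$. (The reason: any $v \in \beta(t)$ lies in one of $A$, $B$, $C$, hence in some $\beta(t_X)$ with $t_X \in V(T')$, and the trace of $v$, being a subtree containing both $t$ and $t_X$, must contain the cut-vertex $t'$ on the path between them.) Consequently $t$ covers no edge that $t'$ does not, so $T'$ still covers everything. Your approach is edge-local: for each cross-clique edge $uv$ with $u \in A$, $v \in B$, you take a node $s$ witnessing the edge in $T$ and push it to the median of $t_A$, $t_B$, $s$, which lies on the $t_A$--$t_B$ path inside $T'$ and, by convexity of traces, still contains both $u$ and $v$. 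Both arguments rest on the same fact that every vertex's trace meets $\{t_A,t_B,t_C\}$; the paper's is a single global statement that handles all edges at once, while yours requires a short case analysis (same clique vs.\ different cliques) and the median detour. Yours works but is a bit bumpier in the writing: the first attempt to show ``$s$ itself is on the $t_A$--$t_B$ path'' is a false start (it needn't be), and the phrase ``dominated by $s$'' does not quite parse --- what you want is simply that the median lies on the $t_A$--$s$ path (so $u$ is in its bag) and on the $t_B$--$s$ path (so $v$ is in its bag). With that cleanup the argument is sound.
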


\begin{proof}
  In a~tree, the intersection of two subtrees is also a~subtree.
  Thus $(T',\beta)$ is a~valid tree-decompo\-si\-tion (of some graph).
  We show that $(T',\beta)$ is a~tree-decomposition of $G$.
  As $A \subseteq \beta(t_A)$, $B \subseteq \beta(t_B)$, and $C \subseteq \beta(t_C)$, every vertex of $G$ is in some bag of~$(T',\beta)$.
  We next argue that no node of $V(T) \setminus V(T')$ is actually useful to cover the edges of~$G$.
  Let $t \in V(T) \setminus V(T')$ and $t'$ be the node of $T'$ in the shortest path from $t$ to $V(T')$.
  It holds that $\beta(t) \subseteq \beta(t')$, as otherwise the trace of any~vertex of $\beta(t) \setminus \beta(t')$ in $(T,\beta)$ is disconnected.
  Hence $t$ does not cover edges that are not already covered by~$t'$.
\end{proof}

If $G$ is a~co-tripartite graph with tripartition $(A,B,C)$, we recall that $I(G)$ is the graph obtained from $G$ by removing all the edges within the cliques $A$, $B$, and $C$.

\begin{lemma}\label{lem:neat}
  Let $G$ be a~co-tripartite graph with tripartition $(A,B,C)$.
  Then $G$ has a~tree-decompo\-si\-tion $(T,\beta)$ of width $\tw(G)$ such that
  \begin{compactitem}
  \item $T$ is a~subdivided claw,
  \item the three leaves $t_A$, $t_B$, $t_C$ of $T$ satisfy $X \subseteq \beta(t_X)$ for every $X \in \{A,B,C\}$, and
  \item the unique degree-3 node $t_{\vertwedge}$ of $T$ is such that $\beta(t_{\vertwedge})$ is a~vertex cover of $I(G)$.
  \end{compactitem}
\end{lemma}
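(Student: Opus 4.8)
The plan is to start from an arbitrary optimal tree-decomposition $(T_0,\beta_0)$ of $G$ and massage it, in three stages, into the claimed shape without ever increasing the width. First I would invoke \cref{lem:clique-in-tree-dec} three times to obtain nodes $t_A,t_B,t_C\in V(T_0)$ whose bags contain $A$, $B$, $C$ respectively, and then apply \cref{lem:Y} to restrict to the minimal subtree $T'$ spanning $t_A,t_B,t_C$; this gives a tree-decomposition of $G$ of the same width whose tree $T'$ is either a path or a subdivided claw (possibly degenerate, if two of the $t_X$ coincide or lie on the path between the other two). Dealing with these degenerate cases is purely cosmetic: if $T'$ is a path, I can artificially attach a new leaf $t_A$ (say) with a copy of a nearby bag, so that without loss of generality $T'$ is a genuine subdivided claw with distinct leaves $t_A,t_B,t_C$ and a unique degree-$3$ node $t_{\vertwedge}$, and with $X\subseteq\beta(t_X)$ for each $X\in\{A,B,C\}$.

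The heart of the argument is the third bullet: showing $\beta(t_{\vertwedge})$ can be taken to be a vertex cover of $I(G)$. The key observation is that every edge $uv$ of $I(G)$ joins two different cliques among $A,B,C$, say $u\in A$ and $v\in B$ (the other cases being symmetric). Since $A\subseteq\beta(t_A)$ and $B\subseteq\beta(t_B)$, the trace of $u$ is a subtree of $T'$ containing $t_A$, and the trace of $v$ is a subtree containing $t_B$; the edge $uv$ forces these two subtrees to intersect, and any node in their intersection lies on the $t_A$–$t_B$ path of $T'$, which passes through $t_{\vertwedge}$. So for every edge $uv$ of $I(G)$, the traces of $u$ and $v$ each meet the $t_A$–$t_B$ path (and similarly for the other clique pairs) — but that alone does not yet put $u$ or $v$ into $\beta(t_{\vertwedge})$. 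To force that, I would modify the decomposition: let $P_A$, $P_B$, $P_C$ be the three paths of $T'$ from $t_{\vertwedge}$ to the respective leaves, and for each leaf $t_X$ that fails to have its whole clique $X$ all the way up to $t_{\vertwedge}$, I push things around. Concretely, the cleanest route is: observe that the three paths share only $t_{\vertwedge}$, so for every edge $uv$ of $I(G)$ with $u\in A,v\in B$, the intersection of $\mathrm{trace}(u)$ and $\mathrm{trace}(v)$ is a subtree meeting both $P_A$ and $P_B$, hence contains $t_{\vertwedge}$. Therefore $\{u,v\}\subseteq\beta(t_{\vertwedge})$ already, for every such edge — meaning $\beta(t_{\vertwedge})$ is automatically a vertex cover of $I(G)$, provided we first ensure that $t_{\vertwedge}$ genuinely separates the three cliques, which the choice of $T'$ as the minimal spanning subtree guarantees.

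I expect the main obstacle to be exactly the interface between the degenerate cases and this separation claim: if, for instance, $t_B$ happens to equal $t_{\vertwedge}$ or lie between $t_{\vertwedge}$ and $t_A$ in $T_0$, then $t_{\vertwedge}$ need not separate $B$ from $A$, and one genuinely has to pad the tree with extra nodes carrying duplicated bags to create a bona fide degree-$3$ node with all three cliques hanging off distinct branches. I would handle this by the following normalization: take $T'$ from \cref{lem:Y}; if it is a path with endpoints among $\{t_A,t_B,t_C\}$, subdivide one internal edge and hang off it a path of new nodes ending in a fresh leaf, all assigned the bag of the subdivision point (this is clearly still a valid tree-decomposition of $G$ of the same width), then reassign the roles so that the three leaves are the three clique-containing nodes; if $T'$ is already a subdivided claw but one leaf $t_X$ does not reach its clique, similarly lengthen the relevant branch with duplicated bags and move $t_X$ outward. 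Once $T'$ is a nondegenerate subdivided claw with $A,B,C$ sitting in the three distinct leaf bags, the separation of the cliques by $t_{\vertwedge}$ is forced, and the vertex-cover property follows from the Helly/trace argument above. Finally, relabel so that $T:=T'$, and the width is unchanged throughout since every modification only duplicated existing bags, completing the proof.
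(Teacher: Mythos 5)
Your proposal follows the same overall route as the paper: invoke \cref{lem:clique-in-tree-dec} to find $t_A,t_B,t_C$, prune to the minimal spanning subtree via \cref{lem:Y}, pad with a duplicated bag if that subtree is a path, and then argue that $\beta(t_{\vertwedge})$ is a vertex cover. The first three steps are fine (the degenerate-case handling is a little heavier than necessary --- it suffices to attach a single new neighbour with the same bag to whichever of $t_A,t_B,t_C$ is not yet a leaf --- but that is cosmetic).

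The last step, however, contains a genuine error. You assert that for every edge $uv$ of $I(G)$ with $u\in A$, $v\in B$, ``the intersection of $\mathrm{trace}(u)$ and $\mathrm{trace}(v)$ is a subtree meeting both $P_A$ and $P_B$, hence contains $t_{\vertwedge}$. Therefore $\{u,v\}\subseteq\beta(t_{\vertwedge})$.'' This is not justified, and in general it is false. The intersection of the two traces is indeed a non-empty subtree, but nothing forces it to meet $P_A$: for instance, $\mathrm{trace}(u)$ could be all of $T$ while $\mathrm{trace}(v)$ is a small subpath near $t_B$, in which case the intersection lies entirely in $P_B\setminus\{t_{\vertwedge}\}$ and $v\notin\beta(t_{\vertwedge})$. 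So the claim that \emph{both} endpoints land in $\beta(t_{\vertwedge})$ overshoots what is true (and also more than the vertex-cover property needs, which is that \emph{at least one} endpoint be there). The correct argument is the contrapositive one the paper uses: if neither $u$ nor $v$ lies in $\beta(t_{\vertwedge})$, then $\mathrm{trace}(u)$ is a connected subtree containing $t_A$ but avoiding $t_{\vertwedge}$, hence is confined to the component of $T\setminus\{t_{\vertwedge}\}$ containing $t_A$, i.e., $P_A\setminus\{t_{\vertwedge}\}$; similarly $\mathrm{trace}(v)\subseteq P_B\setminus\{t_{\vertwedge}\}$. These two sets are disjoint, so no bag covers the edge $uv$ --- a contradiction. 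Replacing your Helly-style intersection claim with this separation argument closes the gap.
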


\begin{proof}
  By applying~\cref{lem:Y} with a~tree-decompo\-si\-tion of minimum width, $G$ has a~tree-decompo\-si\-tion $(T,\beta)$ of width $\tw(G)$ such that $T$ is a~subdivided claw whose three leaves $t_A$, $t_B$, $t_C$ verify $X \subseteq \beta(t_X)$ for every $X \in \{A,B,C\}$.
  (Indeed, if the minimal subtree connecting $t_A$, $t_B$, $t_C$ is a~path, one can simply add a~neighbor with the same bag to whichever of $t_A$, $t_B$, $t_C$ is not yet a~distinct leaf node.
  Note that this process adds up to three new bags.)
  Hence the first two items of the lemma are satisfied.
  
  Let $t_{\vertwedge}$ be the unique node of $T$ with degree~3.
  Assume for the sake of contradiction that an edge $uv \in E(I(G))$ is such that $\{u,v\} \cap \beta(t_{\vertwedge}) = \emptyset$.
  Without loss of generality, let us assume that $u \in A$ and $v \in B$.
  We claim that no bag of $(T,\beta)$ may then include $\{u,v\}$.
  Indeed, vertex $u$ may only be present in the path of $T$ going from $t_A$ to $t_{\vertwedge}$, excluding $t_{\vertwedge}$ itself, while $v$ is only present within the path from $t_B$ to $t_{\vertwedge}$ (excluded).
\end{proof}

\subsection{If one can satisfy \boldmath{$m'$} clauses, \boldmath{$\tw(G) \leqslant \gamma n + 7m - m' +\gamma-1$}}\label{subsec:forward}

We go back to the particular co-tripartite graph $G := G(\varphi)$ with tripartition $(A,B,C)$ constructed in \cref{subsec:constr}, and exhibit, when a~truth assignment satisfies $m'$ clauses of~$\varphi$, a~tree-decomposition $(T,\beta)$ of~$G$ where every bag has size at~most $\gamma(n+1)+7m-m'$.
Let us recall that $|A|=7m$ and $|B|=|C|=\gamma n$.
The tree $T$ is a~subdivided claw.
Let $t_{\vertwedge}$ be its degree-3 vertex, and $t_A, t_B, t_C$ be its three leaves.

\medskip

\textbf{The bag of $\bm{t_{\vertwedge}}$.}
Let $\mathcal A$ be a~truth assignment of $x_1, \ldots, x_n$ satisfying $m' \leqslant m$ clauses of~$\varphi$.
We denote by $\mathcal A^+ \subseteq [n]$ (resp.~$\mathcal A^- \subseteq [n]$) the set of indices $i$ such that $\mathcal A$ sets $x_i$ to true (resp.~to false).
Thus $(\mathcal A^+,\mathcal A^-)$ partitions~$[n]$.
Let us define the bag of~$t_{\vertwedge}$.
We set \[B' := \bigcup_{i \in \mathcal A^+} B(x_i)\text{,~~and~~} C' := \bigcup_{i \in \mathcal A^-} C(x_i).\]
We define the subset $A' \subset A$ of size $7m-m'$, starting from $A$ and removing, for each satisfied clause~$c_j$, the vertex $a_j(s_1,s_2,s_3) \in A_j$ such that $s_1, s_2, s_3$ are all satisfied by $\mathcal A$.
Thus $|A'|=|A|-m'=7m-m'$.
We set $\beta(t_{\vertwedge}) := A' \cup B' \cup C'$.

\medskip

\textbf{Path-decompositions from $\bm{t_{\vertwedge}}$ to the three leaf nodes.}
We now give the path-decompo\-si\-tions from $t_{\vertwedge}$ to $t_X$ for each $X \in \{A,B,C\}$, such that: $X \subseteq \beta(t_X)$, and for every node $t \in V(T)$ in the path from $t_{\vertwedge}$ to $t_X$, for each $Y \in \{A,B,C\} \setminus \{X\}$, $\beta(t) \cap Y \subseteq Y'$.
We will later see that path-decompositions satisfying the previous conditions combine to define a~valid tree-decomposition.

The path-decomposition from $t_{\vertwedge}$ to $t_B$ goes as follows.
Initially, the \emph{active node} is~$t_{\vertwedge}$.
For every index $i \in \mathcal A^-$ in increasing order, add a~neighbor $t'$ to the current active node $t \in V(T)$, and set $\beta(t') := \beta(t) \cup B(x_i)$.
Then add a~neighbor $t''$ to $t'$, and set $\beta(t'') := \beta(t') \setminus C(x_i)$.
Node $t''$ then becomes the active node.
The active node after the last iteration in $\mathcal A^-$ is $t_B$.
Note that $\beta(t_B)=A' \cup B$.
The path-decomposition from $t_{\vertwedge}$ to $t_C$ is defined analogously by replacing $\mathcal A^-$ with $\mathcal A^+$, and swapping the roles of $B(x_i)$ and $C(x_i)$.
In particular, $\beta(t_C)=A' \cup C$.

We finally describe the path-decomposition from $t_{\vertwedge}$ to $t_A$.
The path of $T$ from $t_{\vertwedge}$ to $t_A$ is: $t_{\vertwedge}=t'_0, t_1, t'_1, t_2, t'_2, \ldots, t_n, t'_n=t_A$.
For every integer $i$ from 1 to $n$, we set $\beta(t_i):=\beta(t'_{i-1}) \cup Z_i$ where $Z_i$ is the set of neighbors in $A$ of vertices in $B(x_i)$ if $i \in \mathcal A^+$ (equivalently, if $B(x_i) \subseteq B' \cup C'$), or of vertices in $C(x_i)$ if $i \in \mathcal A^-$ (equivalently, if $C(x_i) \subseteq B' \cup C'$).
And we set $\beta(t'_i):=\beta(t_i) \setminus (B(x_i) \cup C(x_i))$.
Note that $\beta(t_A)=A$.

This finishes the construction of~$(T,\beta)$.
We have three properties to check: the trace of every vertex of~$G$ makes a~non-empty tree in $(T,\beta)$, all edges of $G$ are covered by $(T,\beta)$, and every bag of $(T,\beta)$ has size at~most $\gamma(n+1)+7m-m'$.

\medskip

\textbf{The trace of every $\bm{v \in V(G)}$ induces a non-empty tree.}
The simplest case is when $v \in A'$, as $A'$ is included in \emph{every} bag of $(T,\beta)$.
For each $X \in \{A,B,C\}$, the trace (in $(T,\beta)$) of any $v \in X \setminus X'$ is a~path ending at~$t_X$, since $v \in X \subseteq \beta(t_X)$, and in the path of $T$ going from $t_{\vertwedge}$ to $t_X$, vertex $v$ is added to the bag of some node~$t$, and never removed in the path from $t$ to~$t_X$.
Furthermore, if $\{Y,Z\}=\{A,B,C\} \setminus \{X\}$, no vertex of $X \setminus X'$ is part of a~bag in the path of $T$ from $t_Y$ to $t_Z$.

It remains to check that the traces of vertices of $B' \cup C'$ induce subtrees of~$T$.
This is indeed the case since every vertex $v \in B'$ (resp.~$v \in C'$) is in all the bags along the path from $t_{\vertwedge}$ to~$t_B$ (resp. to~$t_C$), while no vertex in~$B$ (resp. in~$C$) is ever \emph{added} to a~bag in the paths from $t_{\vertwedge}$ to $t_A$ and from $t_{\vertwedge}$ to $t_C$ (resp.~to $t_B$).

\medskip

\textbf{Every edge $\bm{e \in E(G)}$ is covered by $\bm{(T,\beta)}$.}
For each $X \in \{A,B,C\}$, every edge within the clique $X$ is covered by $t_X$, as $X \subseteq \beta(t_X)$.
Therefore, we shall just argue that every edge of $E(I(G))$ is covered by~$(T,\beta)$.

We start with the edges between $B$ and $C$.
Recall that all these edges are of the form $b_i^hc_i^{h'}$ with $i \in [n]$ and $h,h' \in [\gamma]$.
If $i \in \mathcal A^-$, there is, in the path of $T$ from $t_{\vertwedge}$ to $t_B$, a~node where we just added $B(x_i)$, while $C(x_i) \subseteq C'$ is still present (and $C(x_i)$ is removed in the following bag).
Thus this node covers $b_i^hc_i^{h'}$, as well as every edge between $B(x_i)$ and $C(x_i)$.
Analogously, if $i \in \mathcal A^+$, there is in the path of $T$ from $t_{\vertwedge}$ to $t_C$, a~node where we just added $C(x_i)$, while $B(x_i) \subseteq B'$ is still present.

We now consider edges between $A$ and $B \cup C$.
Note that every edge between $A'$ and $B$ (resp.~$C$) is covered by $t_B$ (resp.~$t_C$) since $\beta(t_B)=A' \cup B$ (resp.~$\beta(t_C)=A' \cup C$).
We thus turn our attention to edges between $A \setminus A'$ and $B \cup C$.
By construction, no vertex of $A \setminus A'$ has a~neighbor in $(B \cup C) \setminus (B' \cup C')$.
Hence we are only left with edges between $A \setminus A'$ and $B' \cup C'$.

Fix an edge $uv \in E(G)$ with $u \in A \setminus A'$ and $v \in B' \cup C'$, and let $i \in [n]$ be such that $v$ belongs to~$\beta(t_i)$ but not to~$\beta(t'_i)$.
Note that $i$ is well-defined and corresponds to the index such that $v \in (B(x_i) \cup C(x_i)) \cap (B' \cup C')$.
As $N_G((B(x_i) \cup C(x_i)) \cap (B' \cup C')) \cap A = N_G(v) \cap A \subseteq \beta(t_i)$, edge $uv$ is covered by~$t_i$ (since $u \in N_G(v) \cap A$).

\medskip

\textbf{Every bag of $\bm{(T,\beta)}$ has size at~most $\bm{\gamma(n+1)+7m-m'}$.}
First note that $|\beta(t_{\vertwedge})|=\gamma n+7m-m'$ and that each bag along the path of $T$ between $t_B$ and $t_C$ has either size $\gamma n+7m-m'$ or size $\gamma(n+1)+7m-m'$.
We finally claim that for every $i \in [n]$, $|\beta(t_i)| \leqslant |\beta(t'_{i-1})|+\gamma$ and $|\beta(t'_i)| \leqslant |\beta(t'_{i-1})|$.
For the former inequality, notice that vertices of $B(x_i)$ (resp.~$C(x_i)$) have the same four neighbors within each $A_j$ such that $x_i$ appears positively (resp.~negatively) in $c_j$, and the same three neighbors within each $A_j$ such that $x_i$ appears negatively (resp.~positively) in $c_j$, and no other neighbor in~$A$.
Further recall that every \emph{literal} with $p$ positive occurrences and $q$ negative occurrences in $\varphi$ satisfies $\gamma \geqslant 4p+3q$.
Consequently, to check that $|\beta(t'_i)| \leqslant |\beta(t'_{i-1})|$, simply recall that the bag of $t'_i$ is that of $t_i$ deprived of $\gamma$ vertices.
Therefore every bag along the path from $t_{\vertwedge}$ to $t_A$ has size at~most $|\beta(t'_0)|+\gamma=|\beta(t_{\vertwedge})|+\gamma=\gamma(n+1)+7m-m'$.

\medskip

We have thus established that the treewidth of~$G$ is at~most~$\gamma n + 7m - m'+ \gamma - 1$.

\subsection{If at~most \boldmath{$m''$} clauses are satisfiable, \boldmath{$\tw(G) \geqslant \gamma n + 7m-m''-1$}}\label{subsec:backward}

By~\cref{lem:neat}, there is a~tree-decomposition of $G$ of width $\tw(G)$ with a~bag (that of $t_{\vertwedge}$) that is a~vertex cover of $I(G)$.
Therefore, we shall just argue that every vertex cover of $I(G)$ has size at~least $\gamma n + 7m-m''$, when every truth assignment satisfies at~most $m''$ clauses of~$\varphi$.
We fix any inclusion-wise minimal vertex cover $S$ of $I(G)$, and set $A' := A \cap S$, $B' := B \cap S$, $C' := C \cap S$.
We show the following replacement lemma, to obtain a~more suitable and non-larger vertex cover~$S'$ of~$I(G)$.

\begin{lemma}\label{lem:replacement}
  There is a~vertex cover $S'$ of~$I(G)$ such that $S' \cap (B(x_i) \cup C(x_i)) \in \{B(x_i),C(x_i)\}$ for every $i \in [n]$, and $|S'| \leqslant |S|$. 
\end{lemma}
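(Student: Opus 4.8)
The plan is to start from the minimal vertex cover $S$ and repair, one variable at a time, the ``bad'' variables $x_i$ for which $S \cap (B(x_i) \cup C(x_i)) = B(x_i) \cup C(x_i)$, turning such a variable into a ``good'' one (where only $B(x_i)$ or only $C(x_i)$ is taken) without increasing the size of the cover. Since each bad variable costs $2\gamma$ vertices and a good one costs $\gamma$, each repair frees up $\gamma$ vertices, which we will spend to re-cover the edges that become uncovered. First I would fix a bad variable $x_i$ and decide which of $B(x_i)$, $C(x_i)$ to keep: by construction $B(x_i)$ has a neighbor in each $A_j$ where $x_i$ occurs (four neighbors if positively, three if negatively), and likewise $C(x_i)$, and $\gamma \geqslant 4p+3q$ where $p,q$ are the numbers of positive and negative occurrences of the literal $x_i$ (and symmetrically for $\neg x_i$). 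So whichever side we drop, the set $Z_i \subseteq A$ of $A$-neighbors of that side has size at most $\gamma$ — say we drop $C(x_i)$, keep $B(x_i)$, and add $Z_i := N_G(C(x_i)) \cap A$ to the cover. The new set $S_i := (S \setminus C(x_i)) \cup Z_i$ still covers every edge of $I(G)$: the $B$--$C$ edges at $x_i$ are covered by $B(x_i)$; the only $I(G)$-edges newly exposed are those from $C(x_i)$ into $A$, and these now have their $A$-endpoint in $Z_i$; all other edges were covered before by vertices still present. And $|S_i| = |S| - \gamma + |Z_i| \leqslant |S|$.

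The subtlety is that after one repair the set $S_i$ need no longer be inclusion-wise minimal, so I cannot blindly iterate the argument as stated (which invoked minimality of $S$ to conclude $S \cap (B(x_i)\cup C(x_i)) \in \{B(x_i), C(x_i), B(x_i)\cup C(x_i)\}$). The clean way to handle this is: process the bad variables $x_{i_1}, \ldots, x_{i_r}$ in order, at step $k$ modifying only the coordinates inside $B(x_{i_k}) \cup C(x_{i_k})$ (drop one side) and inside $A$ (add the at-most-$\gamma$ set $Z_{i_k}$ of newly-needed $A$-vertices). Because these modifications for distinct bad variables touch disjoint parts of $B \cup C$, and only ever \emph{add} vertices in $A$, correctness and the size bound compose: after all $r$ steps we get $S^\star$ with $S^\star \cap (B(x_i) \cup C(x_i)) \in \{B(x_i), C(x_i)\}$ for every $i$ (the originally good variables are untouched, the bad ones are fixed), $S^\star$ is still a vertex cover of $I(G)$, and $|S^\star| \leqslant |S|$. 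Finally set $S' := S^\star$ (or, if one insists on minimality, take any inclusion-wise minimal vertex cover contained in $S^\star$ — note one must check that shrinking does not reintroduce $B(x_i)\cup C(x_i)$ on any variable, but since dropping a whole side would expose the $B$--$C$ edges it cannot, so the property $S' \cap (B(x_i)\cup C(x_i)) \in \{B(x_i),C(x_i)\}$ is preserved under taking subsets that remain covers).

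The main obstacle I anticipate is precisely this bookkeeping around minimality and the composition of the local repairs: making sure that (i) the choice of which side to drop is always available within budget $\gamma$, using the $4p+3q \leqslant \gamma$ hypothesis correctly for \emph{both} literals $x_i$ and $\neg x_i$ simultaneously; and (ii) that adding $Z_{i_k}$ to $A$ in one step does not interfere with the analysis of a later step — which it does not, since later steps only inspect $B \cup C$ coordinates and only grow the $A$-part further. A minor point to get right is the exact neighbor count: a clause $C_j$ with $x_i$ positive contributes to $A_j$ exactly the four vertices $a_j(\ldots)$ whose corresponding assignment sets $x_i$ true, so $B(x_i)$ has $4$ neighbors there and $3$ in each $A_j$ with $x_i$ negative (the seven vertices minus the three where $x_i$ is false... wait — it is the vertices where $x_i$ is \emph{false}, of which there are three among the seven satisfying assignments when $x_i$ occurs negatively), matching the ``four positive, three negative'' accounting in \cref{subsec:forward}; getting $|Z_i| = 4p + 3q \leqslant \gamma$ from this is the one computation worth double-checking.
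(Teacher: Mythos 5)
Your proposal is correct and uses essentially the same argument as the paper: for each variable with $B(x_i) \cup C(x_i) \subseteq S$, drop $C(x_i)$ and add $N_G(C(x_i)) \cap A$ (of size at most $\gamma$ by the constraint on $\gamma$), noting that the replacements for different $i$ are independent. The worries you raise about minimality and composition, while sensible to flag, dissolve exactly as you conclude, and the paper treats them as immediate by performing all replacements at once without needing the resulting $S'$ to be minimal.
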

\begin{proof}
  As, for every $i \in [n]$, every vertex of $B(x_i)$ is adjacent to every vertex of $C(x_i)$, it necessarily holds that $B(x_i) \subseteq S$ or $C(x_i) \subseteq S$.
As $B(x_i)$ (resp.~$C(x_i)$) is a~module (in $G$ and in $I(G)$) and $S$ is inclusion-wise minimal, it further holds that $S \cap (B(x_i) \cup C(x_i)) \in \{B(x_i),C(x_i),B(x_i) \cup C(x_i)\}$. 
  
  We initialize the set $S'$ to $S$.
  For every $i \in [n]$ such that $B(x_i) \cup C(x_i) \subseteq S$, we remove $C(x_i)$ from $S'$, and add $N_G(C(x_i)) \cap A$ to $S'$.
  It was already observed that the condition imposed on $\gamma$ implies that $|N_G(C(x_i)) \cap A| \leqslant \gamma$, so the size of $S'$ may only decrease (as $|C(x_i)|=\gamma$).
  The replacement preserves the property of being a~vertex cover of~$I(G)$, since all the neighbors of $C(x_i)$ end up in $S'$ whenever $C(x_i)$ is removed.
\end{proof}

By~\cref{lem:replacement}, we just need to show that $|S'| \geqslant \gamma n + 7m-m''$.
We note that $S' \cap (B \cup C)$ has size $\gamma n$ and defines a~truth assignment $\mathcal A$ of $x_1, \ldots, x_n$: $\mathcal A$ sets $x_i$ to true if $S' \cap (B(x_i) \cup C(x_i))=B(x_i)$, and to false if $S' \cap (B(x_i) \cup C(x_i))=C(x_i)$.
Seeing $S'$ as $(A \setminus \widehat A) \cup (S' \cap (B \cup C))$ for some $\widehat A \subseteq A$, it~holds that $|S'|=7m- |\widehat A|+\gamma n$.
For $S'$ to be a~vertex cover of $I(G)$, a~vertex $v$ of $\widehat A$ has to have all its neighbors within $B \cup C$ included in~$S'$.
This can only happen if $v=a_j(s_1,s_2,s_3)$, $c_j$ is satisfied by $\mathcal A$, and the literals $s_1, s_2, s_3$ are all satisfied by $\mathcal A$.
This implies that $|\widehat A| \leqslant m''$.
Therefore, $|S'| \geqslant \gamma n+7m-m''$.
We conclude that the treewidth of~$G$ is at~least $\gamma n + 7m-m''-1$.

\subsection{Instantiating the Reduction}\label{subsec:instantiate}

We now apply the previous reduction to hard (to approximate) \textsc{3-SAT} instances in order to show~\cref{thm:no-ptas,thm:eth-lb,thm:eth-inapprox}.
Recall that as long as $\gamma=O(1)$, our reduction produces graphs on $O(n+m)$ vertices from $n$-variable $m$-clause \textsc{3-SAT} formulas.
In particular, it is a~polynomial reduction, and linear in $n$ whenever $m=O(n)$.
For the proof of~\cref{thm:eth-inapprox}, we will slightly adapt the reduction with varying values of possibly superconstant $\gamma_i$ (depending on the number of occurrences of~$x_i$), all at~most~polylogarithmic in~$m$.

\medskip

\textbf{Proof of~\cref{thm:no-ptas}.}
Following Berman, Karpinski, and Scott \cite{Berman03}, we call \emph{$(3,2B)$-SAT} formula a~3-CNF formula where every clause has exactly three literals, and every variable appears exactly twice positively, and exactly twice negatively.
The same authors showed that, for any $\varepsilon > 0$, it is NP-complete to distinguish within $m$-clause $(3,2B)$-SAT formulas those for which at~least $(1-\varepsilon)m$ clauses are satisfiable from those for which at~most $(\frac{1015}{1016}+\varepsilon)m$ clauses are satisfiable.

We observe that $m$-clause $(3,2B)$-SAT formulas have $\frac{3m}{4}$ variables.
Substituting $m' := (1-\varepsilon)m$, $m'' := (\frac{1015}{1016}+\varepsilon)m$, $\gamma := 4 \cdot 2 + 3 \cdot 2 = 14$, and $n = \frac{3m}{4}$, we get that it is NP-hard to distinguish graphs  of treewidth at~most $(10.5+6+\varepsilon)m+13$ from graphs of treewidth at~least $(10.5+7-\frac{1015}{1016}-\varepsilon)m-1$.
We conclude as \[\frac{(17.5-\frac{1015}{1016}-\varepsilon)m-1}{(16.5+\varepsilon)m+13} > 1.00005\] for sufficiently small $\varepsilon$ and large~$m$.

We observe that one can increase the inapproximability gap by performing (with the required adjustments) our reduction from \textsc{Max-2-SAT} where every variable has at~most four occurrences.
This problem has a~better known inapproximability factor than \textsc{Max-$(3,2B)$-SAT}~\cite{Berman03}, and this would also allow to decrease the value of~$\gamma$ (to 7). 

\medskip

\textbf{Proof of~\cref{thm:eth-lb}.}
The Sparsification Lemma of Impagliazzo, Paturi, and Zane~\cite{sparsification} implies that, under the ETH, $n$-variable \textsc{3-SAT} requires $2^{\Omega(n)}$ time even within formulas where every variable appears at most $B$ times, for some constant~$B$.
We denote this fragment of \textsc{3-SAT} by \textsc{3-SAT-$B$} (even when $B$ may depend on~$n$ or~$m$).
We set $\gamma := 4B$ so that the condition on $\gamma$ is verified, and we duplicate $\gamma+1$ times the initial hard formula $\varphi'$ of \textsc{3-SAT-$B$}.
Each copy is on a~pairwise disjoint set of variables.
We thus create an instance $\varphi$ of \textsc{3-SAT-$B$} on $(\gamma+1)n$ variables, such that it takes $2^{\Omega(n)}$ time to tell whether all $m$ clauses of $\varphi$ are satisfiable, or at~most $m-(\gamma+1)$ clauses are, unless the ETH fails.

For $G(\varphi)$, which has $7m+2\gamma(\gamma+1) n \leqslant 7B(\gamma+1)n+2\gamma(\gamma+1) n=O(n)$ vertices, this translates into distinguishing whether its treewidth is at~most $\gamma(\gamma+1) n + 6m + \gamma - 1$ or at~least $\gamma(\gamma+1) n + 7m - (m-(\gamma+1)) - 1 = \gamma(\gamma+1) n + 6m + \gamma$, implying~\cref{thm:eth-lb}.

\medskip

\textbf{Proof of~\cref{thm:eth-inapprox}.}
Building upon the polynomial-time inapproximability of \textsc{3-SAT} shown by H{\aa}stad~\cite{Hastad01}, Moshkovitz and Raz proved that, unless the ETH fails, approximating $n$-variable \textsc{3-SAT} within ratio greater than $\frac{7}{8}$ requires time $2^{n^{1-o(1)}}$~\cite{Moshkovitz10}.
Recently, Bafna, Minzer, and Vyas improved the time lower bound to $2^{\Omega(n/\log^c n)}$ for some $c>0$~\cite{Bafna25}.

\begin{theorem}[Corollary 1.3 in \cite{Bafna25}]\label{thm:bmv}
  For any $r \in (\frac{7}{8},1)$, there is a~constant~$c := c(r)$ such that any \mbox{$r$-approx}ima\-tion algorithm for $m$-clause \textsc{3-SAT-$\log^c m$} requires $2^{\Omega(m / \log^c m)}$ time, unless the ETH fails.
\end{theorem}

Fix, say, $r := 0.88 > \frac{7}{8}$.
Let $\varphi$ be any $n$-variable $m$-clause 3-CNF formula with at~most~$\log^c m$ occurrences per variable, for $c := c(r)$ satisfying~\cref{thm:bmv}.
We slightly tune our reduction.
Instead of having a~shared value $\gamma$, we set $\gamma_i$ to be four times the number of occurrences of variable $x_i$ in~$\varphi$.
Let $\gamma$ be the average of the values $\gamma_i$, hence $\gamma n = \sum_{i \in [n]} \gamma_i \leqslant 12m$.
Note that the built graph $G(\varphi)$ has $O(m)$ vertices, and for every $i \in [n]$, $\gamma_i \leqslant \log^c m$.
\Cref{thm:bmv} creates a~gap for $G(\varphi)$ between the treewidth upper bound of $\gamma n + 6m + \log^c m - 1$ (when $\varphi$ is satisfiable) and the treewidth lower bound of $\gamma n + 6.12m - 1$ (when $\varphi$ is unsatisfiable).
For sufficiently large $m$, this makes an approximability gap of $\delta > 1$ (for some $\delta$ depending on~$c$) as $m \geqslant \gamma n/12$, which implies~\cref{thm:eth-inapprox}.

\end{document}